\newtheorem{definition}{Definition}
\newtheorem{definition2}{Lemma}
\newtheorem{definition4}{Theorem}
\newtheorem{definition5}{Remark}
\newtheorem{definition6}{Claim}
\newtheorem{definition7}{Conjecture}
\newtheorem{definition9}{Proposition}
\newtheorem{Co}{Corollary}
\title{SBSCV}
\author{Alberto Acevedo}
\date{2023}
\begin{document}
\thispagestyle{plain}
\begin{center}
   
    \textbf{Asymptotic Quantum State Discrimination for Mixtures of Unitarily Related States.}

    \vspace{0.4cm}
    
    \textbf{Alberto Acevedo$^{1}$, Janek Wehr $^{1,2}$}

      \vspace{0.4cm}
      \textbf{$^{1}$Program in Applied Mathematics, The University of Arizona, Tucson, USA, 85721}

      \vspace{0.4cm}
      \textbf{$^{2}$Department of Mathematics, The University of Arizona, Tucson, USA, 85721}

    \vspace{0.9cm}
    \textbf{Abstract:}
    Given a mixture of states, finding a way to optimally \emph{discriminate} its elements is a prominent problem in quantum communication theory. In this paper, we will address mixtures of density operators that are unitarily equivalent via elements of a one-parameter unitary group, and the corresponding \emph{quantum state discrimination} (QSD) problems. We will be particularly interested in QSD as time goes to infinity. We first present an approach to QSD in the case of countable mixtures and address the respective asymptotic QSD optimization problems, proving necessary and sufficient conditions for minimal error to be obtained in the asymptotic regime (we say that in such a case QSD is \emph{fully solvable}). We then outline an analogous approach to uncountable mixtures, presenting some conjectures that mirror the results presented for the cases of countable mixtures. As a technical tool, we prove and use an infinite dimensional version of the well-known Barnum-Knill bound \cite{knill}. 

\end{center}
\section{Introduction}
\;\;\; Quantum State Discrimination (QSD) is the problem of minimizing the error in distinguishing between the elements of a mixture of density operators $\sum_{i}p_{i}\boldsymbol{\hat{\rho}}_{i}$. To define what is meant by \emph{distinguishing} we first introduce the concept of a POVM---\emph{positive operator-valued measure} \cite{Nielsen}, and  \emph{quantum measurement}; we note that all of the Hilbert spaces in this paper will be assumed to be separable. 
\begin{definition}{\textbf{POVM}:}
Consider a Hilbert space $\mathscr{H}$. A trace-preserving POVM is a set of positive semi-definite operators $\big\{\mathbf{\hat{M}}_{i}^{\dagger}\mathbf{\hat{M}}_{i}\big\}_{i}$ acting in $\mathscr{H}$ that sum to the identity operator. i.e. 
\begin{equation}
\label{eqn:res}
\sum_{i}\mathbf{\hat{M}}_{i}^{\dagger}\mathbf{\hat{M}}_{i} = \mathbb{I}_{\mathscr{H}}
\end{equation}
\end{definition}
\begin{definition5}
The POVM may consist of an uncountable set of semi-definite operators as well. In this case the analogous set of operators, e.g. $\mathbf{\hat{M}}_{x}^{\dagger}\mathbf {\hat{M}}_{x}$ ($x\in\mathbb{R}$) must satisfy the analogous constraint as (\ref{eqn:res}), i.e.
\begin{equation}
\label{eqn:pvmqm3}
\int\mathbf{\hat{M}}_{x}^{\dagger}\mathbf{\hat{M}}_{x}dx = \mathbb{I}_{\mathscr{H}}
\end{equation}
\end{definition5}
\begin{definition}{\textbf{Quantum Measurement}:}
Let $\mathcal{S}\big(\mathscr{H}\big)$ be the set of density operators acting in $\mathscr{H}$. Let $\boldsymbol{\hat{\rho}}\in \mathcal{S}\big(\mathscr{H}\big)$ be a system's state. The state of this system after a measurement is
\begin{equation}
\sum_{i}\mathbf{\hat{M}}_{i}\boldsymbol{\hat{\rho}}\mathbf{\hat{M}}_{i}^{\dagger}
\end{equation}

\end{definition}

The QSD optimization problem \cite{hellstrom} \cite{Montanaro} \cite{qiu} \cite{bae} \cite{barnett} may now be defined. Let $\mathcal{S}(\mathcal{\mathscr{H}}\big)$ be the space of density operators acting in $\mathscr{H}$. Given a mixture of density operators, 
\begin{equation}
\label{eqn:countablemixture}
\boldsymbol{\hat{\rho}} = \sum_{i=1}^{N}p_{i} \boldsymbol{\hat{\rho}}_{i} 
\end{equation}
where $\sum_{i = 1}^{N}p_{i} = 1$, the theory of QSD aims to find a POVM $\{\mathbf{\hat{M}}_{l}^{\dagger}\mathbf{\hat{M}}_{l}\}_{l=1}^{K}\subset \mathcal{B}(\mathscr{H})$ ( $K\geq N$) which resolves the identity operator of $\mathcal{B}(\mathscr{H})$, and minimizes the quantity below, referred to as the \emph{probability error}. 
\begin{equation}
\label{eqn:minerror2}
p_{E}\big\{\{p_{i},\boldsymbol{\hat{\rho}}_{i}\}_{i=1}^{N}, \{\mathbf{\hat{M}}_{l}\big\}_{l=1}^{K} \big\}:=1-\sum_{i=1}^{N}p_{i}Tr\big\{\mathbf{\hat{M}}_{i}\boldsymbol{\hat{\rho}}_{i}\mathbf{\hat{M}}^{\dagger}_{i} \big\} 
\end{equation}

To see what the significance of (\ref{eqn:minerror2}) is, let us consider the unread measurement state (\ref{eqn:pvmqm3}) corresponding to the mixture $\sum_{i=1}^{N}\boldsymbol{\hat{\rho}}_{i}$ after a measurement generated by the POVM $\big\{\mathbf{\hat{M}}^{\dagger}_{i}\mathbf{\hat{M}}_{i}\big\}_{i=1}^{N}$.
\begin{equation}
\label{eqn:cupii2}
\sum_{j=1}^{N}\mathbf{\hat{M}}_{j}\Big(\sum_{i=1}^{N}p_{i} \boldsymbol{\hat{\rho}}_{i}\Big)\mathbf{\hat{M}}_{j}^{\dagger} = 
\end{equation}
\begin{equation}
\label{eqn:cupii}
\sum_{i=1}^{N}p_{i}\mathbf{\hat{M}}_{i}\boldsymbol{\hat{\rho}}_{i}\mathbf{\hat{M}}_{i}^{\dagger} + \sum_{j=1}^{N}\sum_{i; j\neq i}^{N}p_{i} \mathbf{\hat{M}}_{j}\boldsymbol{\hat{\rho}}_{i}\mathbf{\hat{M}}_{j}^{\dagger} 
\end{equation}
From the definition of POVM , it is clear that $Tr\big\{\sum_{j=1}^{N}\mathbf{\hat{M}}_{j}\Big(\sum_{i=1}^{N}p_{i} \boldsymbol{\hat{\rho}}_{i}\Big)\mathbf{\hat{M}}_{j}^{\dagger}\big\} = 1$, hence from (\ref{eqn:cupii2}) and (\ref{eqn:cupii})
\begin{equation}
\label{eqn:cuppis3}
1 = \sum_{i=1}^{N}p_{i}Tr\big\{\mathbf{\hat{M}}_{i}\boldsymbol{\hat{\rho}}_{i}\mathbf{\hat{M}}_{j}^{\dagger}\big\} +  \sum_{j=1}^{N}\sum_{i; i\neq j}^{N}p_{i}Tr\big\{ \mathbf{\hat{M}}_{j}\boldsymbol{\hat{\rho}}_{i}\mathbf{\hat{M}}_{j}^{\dagger}\big\}    
\end{equation}
The term $Tr\big\{\mathbf{\hat{M}}_{i}\boldsymbol{\hat{\rho}}_{i}\mathbf{\hat{M}}_{i}^{\dagger}\big\} $ is the probability that the system modeled by the mixture (\ref{eqn:countablemixture}) was in the state $\boldsymbol{\hat{\rho}}_{i}$ given that the outcome of the measurement was the state $
\frac{\mathbf{\hat{M}}_{i}\Big(\sum_{i=1}^{N}p_{i} \boldsymbol{\hat{\rho}}_{i}\Big)\mathbf{\hat{M}}_{i}^{\dagger}}{Tr\big\{ {\mathbf{\hat{M}}_{i}\Big(\sum_{i=1}^{N}p_{i} \boldsymbol{\hat{\rho}}_{i}\Big) \mathbf{\hat{M}}_{i}^{\dagger}\big\}}}$, meaning that $\sum_{i=1}^{N}p_{i}Tr\big\{\mathbf{\hat{M}}_{i}\boldsymbol{\hat{\rho}}_{i}\mathbf{\hat{M}}_{i}^{\dagger}\big\}$ is the probability that the POVM \emph{discriminates} between the different $\boldsymbol{\hat{\rho}}_{i}$ of the mixture (\ref{eqn:countablemixture}). The term $1-\sum_{i=1}^{N}p_{i}Tr\big\{\mathbf{\hat{M}}_{i}\boldsymbol{\hat{\rho}}_{i}\mathbf{\hat{M}}_{i}^{\dagger}\big\}$, the \emph{probability error}, is hence the probability that the POVM fails to discriminate between the elements of the mixture (\ref{eqn:countablemixture}). Using (\ref{eqn:cuppis3}) it may also be expressed as $\sum_{j=1}^{N}\sum_{i; i\neq j}^{N}p_{i}Tr\big\{ \mathbf{\hat{M}}_{j}\boldsymbol{\hat{\rho}}_{i}\mathbf{\hat{M}}_{j}^{\dagger}\big\}$. In what follows we will just write $p_{E}$ in place of $p_{E}\big\{\{p_{i},\boldsymbol{\hat{\rho}}_{i}\}_{i=1}^{N}, \{\mathbf{\hat{M}}_{l}\big\}_{l=1}^{K} \big\}$when the context is clear. With this notation, the QSD optimization problem is the problem of computing the following minimum. 
\begin{equation}
\label{eqn:minerror3}
\min_{POVM}p_{E}
\end{equation}
The QSD problem will be called \emph{fully solvable} when $\min_{POVM}p_{E} = 0 $.\\

In this paper we consider an {\it Asymptotic} QSD: let $\mathscr{E}_{i}$ be a family of completely positive linear transformations, mapping density operators $\boldsymbol{\hat{\rho}}$ acting in a Hilbert space $\mathscr{H}_{1}$ to density operators $\mathscr{E}_{i}\big(\boldsymbol{\hat{\rho}}\big)$ acting in a Hilbert space $\mathscr{H}_{2}$ (equal to  $\mathscr{H}_{1}$ or not).  These operators depend on parameters $n_{i}$, and we consider the QSD problem for the mixture of the operators $\mathscr{E}_{i}\big(\boldsymbol{\hat{\rho}}\big)$ The object of our interest is the asymptotic behavior of the minimal error (\ref{eqn:minerror3}) corresponding to this QSD, in the limit as some or all of the parameters in $\{n_{i}\}_{i}$ go to infinity. We will say that the asymptotic QSD problem is \emph{fully solvable} with respect to the parameter $n_{i}$ when 
\begin{equation}
\label{eqn:minerror200}
\lim_{|i|\rightarrow \infty}\min_{POVM}p_{E}\big\{\{p_{i},\mathscr{E}_{i}\big(\boldsymbol{\hat{\rho}}\big)\}_{i=1}^{N}, \{\mathbf{\hat{M}}_{l}\big\}_{l=1}^{K} \big\} = 0
\end{equation}
the minimization above is understood to be taken for every $i$.

Asymptotic QSD arises naturally in the study of quantum communication, quantum-to-classical transition, and quantum measurement, to name a few applications \cite{szkola}\cite{JKthree}\cite{bae} \cite{cohen}. As an example, consider the case where a state is redundantly prepared by some party $A$ in the state $\boldsymbol{\hat{\rho}}_{i}$  with probability $p_{i}$, $n$ copies of each state being made prior to being communicated to another party $B$. From the perspective of $B$, the received state is a mixture of the form 
\begin{equation}
\label{eqn:spoon}
\sum_{i}p_{i}\boldsymbol{\hat{\rho}}_{i}^{\otimes n}
\end{equation}
In such case the map $\mathscr{E}_{n}$ maps operators $\boldsymbol{\hat{\rho}}$ to $\boldsymbol{\hat{\rho}}^{\otimes n}$.
Now, define 
\begin{equation}
\min_{POVM}p_{E}(n):= \min_{POVM}p_{E}\big\{\{p_{i},\boldsymbol{\hat{\rho}}_{i}^{\otimes n}\}_{i=1}^{N}, \{\mathbf{\hat{M}}_{l}\big\}_{l=1}^{K} \Big\}.
\end{equation}
In \cite{szkola} it was shown that 
\begin{equation}
\label{eqn:cofee10}
\frac{1}{3}C\leq -\lim_{n\rightarrow \infty} \frac{\log\Big(\min_{POVMP}p_{E}(n)\Big)}{n} \leq C
\end{equation}
where $C$ is a constant involving the quantum Chernoff bound for a mixture of $N$ states; 
a more detailed discussion may be found in appendix \ref{eqn:cherr}.
Hence, for large enough $n$, from eqn (\ref{eqn:cofee10}) we have the following inequalities. 
\begin{equation}
\label{eqn:cofee101}
e^{-n\frac{1}{3}C}\geq\min_{POVM}p_{E}(n) \geq  e^{-nC}
\end{equation}
This shows that the minimum error probability drops off exponentially as the redundancy $n$ grows.\\


More recently, and more pertinently to the theme of this paper, asymptotic QSD has made an appearance in the theory of Spectrum Broadcast Structures (SBS) \cite{JKthree} for \emph{quantum measurement limit}  type interactions  (see section 2.4 in \cite{schloss2} for a discussion of \emph{quantum measurement limit}). The SBS framework aims at deriving a specific type of state from the dynamics. These states are called SBS states; they satisfy a notion of objectivity presented in \cite{JKthree} \cite{JKtwo}  \cite{JKone} and emerge from the asymptotic dynamics studied therein. The definition of an SBS state stipulates a calculation related to QSD when proving that a state of interest is asymptotically an SBS state:  using the fact that $\big|Tr\big\{\mathbf{\hat{A}}\big\}\big|\leq \big\|\mathbf{\hat{A}}\big\|_{1}$ for any trace class operator $\mathbf{\hat{A}}$, we have

\begin{equation}
\label{eqn:burr}
\min_{POVM}\sum_{i=1}^{N}p_{i}\Big\| \boldsymbol{\hat{\rho}}_{i}-\mathbf{\hat{M}}_{i}\boldsymbol{\hat{\rho}}_{i}\mathbf{\hat{M}}^{\dagger}_{i} \Big\|_{1}\geq 
\end{equation}
\begin{equation}
\min_{POVM}\sum_{i=1}^{N}p_{i}\Big|Tr\Big\{ \boldsymbol{\hat{\rho}}_{i}-\mathbf{\hat{M}}_{i}\boldsymbol{\hat{\rho}}_{i}\mathbf{\hat{M}}^{\dagger}_{i} \Big\}\Big|\geq    
\end{equation}
\begin{equation}
\min_{POVM}\Big( 1-\sum_{i=1}^{N}p_{i}Tr\big\{\mathbf{\hat{M}}_{i}\boldsymbol{\hat{\rho}}_{i}\mathbf{\hat{M}}^{\dagger}_{i} \big\}  \Big) 
\end{equation}
which is just the QSD error bound. 
In \cite{JKone} \cite{JKtwo} \cite{JKthree}, special attention has been given to optimization problems of the following form. 
\begin{equation}
\label{cofee1000}
 \min_{POVM}\sum_{i}p_{i}\Big\|e^{-itx_{i}\mathbf{\hat{B}}}\boldsymbol{\hat{\rho}}e^{itx_{i}\mathbf{\hat{B}}}-\mathbf{\hat{M}}_{i}e^{-itx_{i}\mathbf{\hat{B}}}\boldsymbol{\hat{\rho}}e^{itx_{i}\mathbf{\hat{B}}}\mathbf{\hat{M}}_{i}^{\dagger}\Big\|_{1}   
\end{equation}
where $x_{i}\neq x_{j}$ for all $i,j; j\neq i$ and $\mathbf{\hat{B}}$ is an arbitrary self-adjoint operator; the state being discriminated here is of course $\sum_{i}p_{i}e^{-itx_{i}\mathbf{\hat{B}}}\boldsymbol{\hat{\rho}}e^{itx_{i}\mathbf{\hat{B}}}$. Such unitarily related mixtures, with a parameter $t$, arise as a direct consequence of the aforementioned \emph{quantum measurement} limit assumption made in \cite{JKone}\cite{JKtwo}\cite{JKthree}. The modified QSD problem (\ref{cofee1000}) is asymptotically \emph{fully solvable} with respect to $t$ only if the original QSD problem is \emph{fully solvable} (in which the trace norm is replaced by the trace). Unlike inexample (\ref{eqn:spoon}), where the asymptotic \emph{full solvability} of the respective QSD problem was independent of the nature of the states involved, here this is not the case. It is easy to find examples where (\ref{cofee1000}) does not vanish as $t\rightarrow \infty$. E.g. let $\mathbf{\hat{B}}$ be equal to the Pauli matrix  $\boldsymbol{\hat{\sigma}}_{x}$ and let $\boldsymbol{\hat{\rho}} = \big|z_{1}\big\rangle\big\langle z_{1}\big|$ with $\big\{\big|z_{i}\big\rangle \big\}_{i}$ the eigenvectors of the Pauli matrix $\boldsymbol{\hat{\sigma}}_{z}$. After some calculation it can be shown that
\begin{equation}
 e^{-itx_{i} \boldsymbol{\sigma}_{x}}\boldsymbol{\hat{\rho}}e^{itx_{i}\boldsymbol{\sigma}_{x}} =  
\end{equation}
\begin{equation}
 \cos^{2}(tx_{i})\big|z_{1}\big\rangle\big\langle z_{1}\big|+\sin^{2}(tx_{i})\big|z_{2}\big\rangle\big\langle z_{2}\big|+
 \end{equation}
 \begin{equation}
 i\cos(tx_{i})\sin(tx_{i})\big|z_{1}\big\rangle\big\langle z_{2}\big|-i\cos(tx_{i})\sin(tx_{i})\big|z_{2}\big\rangle\big\langle z_{1}\big|
\end{equation}
Now consider the mixture
\begin{equation}
\label{eqn:hallo}
\sum_{i=1}^{2}p_{i}e^{-itx_{i} \boldsymbol{\sigma}_{x}}\boldsymbol{\hat{\rho}}e^{itx_{i}\boldsymbol{\sigma}_{x}}
\end{equation}
and let $p_{i}=\frac{1}{2}$ for $i = 1, 2$. An application of a result by Hellstr\"om \cite{hellstrom}, discussed in the next section leads to  
\begin{equation}
\label{eqn:periodic}
\min_{POVM}p_{E}(t) = \frac{1}{2}-\frac{1}{4}\big\|e^{-itx_{1} \boldsymbol{\sigma}_{x}}\boldsymbol{\hat{\rho}}e^{itx_{1}\boldsymbol{\sigma}_{x}}-e^{-itx_{1} \boldsymbol{\sigma}_{x}}\boldsymbol{\hat{\rho}}e^{itx_{1}\boldsymbol{\sigma}_{x}} \big\|_{1}=
\end{equation}
\begin{equation}
2\Big|\sqrt{\big(\cos^{2}(tx_{1})-\cos^{2}(tx_{2})\big)\big(\sin^{2}(tx_{1})-\sin^{2}(tx_{2})\big)-\big(\big|\cos(tx_{1})\sin(tx_{1})-\cos(tx_{2})\sin(tx_{2})\big|^{2}\big)}\Big|
\end{equation}
It is clear that (\ref{eqn:periodic}) does not converge to zero as $t\rightarrow \infty$, so asymptotic QSD is not \emph{fully solvable} and, as a consequence of (\ref{eqn:burr}), the value (\ref{cofee1000}) does not converge to $0$ either.\\

In this paper, we will be focusing on the QSD of unitarily related mixtures (URM); i.e. mixtures of the form $\sum_{i}p_{i}\mathbf{\hat{U}}_{i}(t)\boldsymbol{\hat{\rho}}\mathbf{\hat{U}}_{i}^{\dagger}(t)$, where $\mathbf{\hat{U}}_{i}(t)$ is a one-parameter group of unitary operators. We will provide a necessary and sufficient condition for the asymptotic \emph{full solvability} of the QSD optimization problem for a broad set of URM; this condition will be expressed in terms of the spectral properties of the generator of the unitary group characterizing the URM and the nature of the initial state. In Sections 2,  and 3 we will give an overview of some important results from the literature that we shall be using and give further motivation. In section 4 we present the main results (Theorem \ref{eqn:maintheorem2} and Corollary \ref{eqn:mainresult3}) which give necessary and sufficient conditions for asymptotic QSD optimization of unitarily related mixtures to be \emph{fully solvable} in a broad setting. Drawing parallels between QSD and UQSD we prove a necessary condition for UQSD in the unitarily related mixture case to be  \emph{fully solvable} in the infinite time limit. This condition will again depend only on the spectral properties of the generator of the unitary group characterizing the URM and on the nature of the initial state. We conclude by conjecturing that an analog of Theorem \ref{eqn:maintheorem2} is true for the UQSD case in the unitarily related mixture setting; we follow this conjecture with some motivation and intuition. 

\section{Some Important Theorems}
 \;\;\; The optimization problem (\ref{eqn:minerror3}) is, in general, intractable; exact solutions exist only in a few cases \cite{bae} \cite{barnett}. The most famous of them is the already mentioned so-called \emph{Hellstr\"{o}m bound}. The name may be misleading because it is not a bound. We present its statement following \cite{bae}. 
\begin{definition4}{\textbf{Hellstr\"{o}m Bound}:}
\label{eqn:hellstrom}
Let $\mathscr{H}$ be an arbitrary $Hilbert$ space. For any mixture of the form 
\begin{equation}
p_{1}\boldsymbol{\hat{\rho}}_{1}+p_{2}\boldsymbol{\hat{\rho}}_{2} \in \mathcal{S}(\mathscr{H})
\end{equation}
we have 
\begin{equation}
\min_{POVM}p_{E}\big\{\{p_{i},\boldsymbol{\hat{\rho}}_{i}\}_{i=1}^{2}, \{\mathbf{\hat{M}}_{l}\big\}_{l=1}^{2} \big\} = \frac{1}{2}-\frac{1}{2}\big\|p_{1}\boldsymbol{\hat{\rho}}_{1}- p_{2}\boldsymbol{\hat{\rho}}_{2}\big\|_{1}.
\end{equation}
Here $\big\|\mathbf{\hat{A}}\big\|_{1}:=Tr\big\{\sqrt{\mathbf{\hat{A}}^{\dagger}\mathbf{\hat{A}}}\big\}$ (Trace norm). 
\end{definition4}

Lower and upper bounds for the probability error in the case of a general mixture exist. Some of the more famous are the following. Let the $\big\{\boldsymbol{\hat{\rho}}_{i}\big\}_{i=1}^{N}$ be density operators acting in a Hilbert space $\mathscr{H}$ and $\{p_{i}\}_{i=1}^{N}$ a probability distribution, then 
\begin{definition4}{\textbf{Qiu Bound} \cite{qiu}}
\label{eqn:qiu1}
\begin{equation}
\label{eqn:qiu}
\min_{POVM}p_{E}\geq\frac{1}{2}\big(1-\frac{1}{2(N-1)}\sum_{i}\sum_{j;j\neq i}\big\|p_{i}\boldsymbol{\hat{\rho}}_{i}-p_{j}\boldsymbol{\hat{\rho}}_{j}\big\|_{1}\big)
\end{equation}
\end{definition4}
\begin{definition4}{\textbf{Montanaro Bound} \cite{Montanaro}:}
\label{eqn:monta}
\begin{equation}
\min_{POVM}p_{E}\geq \frac{1}{2}\sum_{i}\sum_{j;j\neq i}p_{i}p_{j}F(\boldsymbol{\hat{\rho}}_{i},\boldsymbol{\hat{\rho}}_{j})
\end{equation}
\end{definition4}
\begin{definition4}{\textbf{Knill and Barnum} \cite{knill}}
\label{eqn:knill1}
\begin{equation}
\label{eqn:knill}
  \min_{POVM}p_{E} \leq \sum_{i}\sum_{j;j\neq i}\sqrt{p_{i}p_{j}}\sqrt{F\big(\boldsymbol{\hat{\rho}}_{i} , \boldsymbol{\hat{\rho}}_{j}\big)}  
\end{equation} 
Here $F\big(\boldsymbol{\hat{\rho}},\boldsymbol{\hat{\sigma}}\big):=\big\|\sqrt{\boldsymbol{\hat{\rho}}}\sqrt{\boldsymbol{\hat{\sigma}}}\big\|_{1}^{2}$.
\end{definition4}
Theorem \ref{eqn:knill1} is proven for the case of finite-dimensional Hilbert spaces. We provide a generalization of this theorem for the case where $\mathscr{H}$ is infinite-dimensional in appendix A.\\

In \cite{qiu}, necessary and sufficient conditions are introduced in order to arrive at a generalization of the Hellstr\"{o}m bound. Unlike Theorem 1, the Qiu bound does not provide explicit knowledge of the POVM that minimizes $p_{E}$. There also exist convex optimization techniques that may be employed in order to find a global minimum \cite{bae}\cite{bae2}
, but we will not be pursuing these ideas.\\

We now present some results particular to the quantum fidelity. 
\begin{definition4}{\textbf{Purification-dependent version of the fidelity}:}
\label{eqn:maxyo}
The quantum fidelity $F\big(\boldsymbol{\hat{\rho}},\boldsymbol{\hat{\sigma}}\big):=\Big\|\sqrt{\boldsymbol{\hat{\rho}}}\sqrt{\boldsymbol{\hat{\sigma}}}\Big\|_{1}^{2} = Tr\Big\{\sqrt{\sqrt{\boldsymbol{\hat{\rho}}}\boldsymbol{\hat{\sigma}}\sqrt{\boldsymbol{\hat{\rho}}}}\Big\}^{2}$ is equivalent to the following \cite{Nielsen}. 
\begin{equation}
F\big(\boldsymbol{\hat{\rho}},\boldsymbol{\hat{\sigma}}\big)=
\max_{|\chi\rangle}\big|\big\langle \xi\big|\chi\big\rangle\big|^{2}
\end{equation}
where $\big|\psi\big\rangle$ is any fixed purification of $\boldsymbol{\hat{\rho}}$, and the maximization is over all purifications of $\boldsymbol{\hat{\sigma}}$.
\end{definition4}

\begin{definition2}{\textbf{Strong concavity of the fidelity; a theorem analogous to the similarly titled theorem about discrete distributions in} \cite{Nielsen}: }
\label{eqn:subcont}
Let $\int p(x) \boldsymbol{\hat{\rho}}_{x}dx$ and $\int q(x) \boldsymbol{\hat{\sigma}}_{x}dx$ be two uncountable mixtures (p(x) and q(x) are probability distributions). Then, 

\begin{equation}
 \sqrt{F\Big(\int p(x) \boldsymbol{\hat{\rho}}_{x}dx, \int q(x) \boldsymbol{\hat{\sigma}}_{x}dx\Big)}\geq \int \sqrt{p(x)q(x)}\sqrt{F\big( \boldsymbol{\hat{\rho}}_{x}, \boldsymbol{\hat{\sigma}}_{x}\big)}dx   
\end{equation}
\end{definition2}
\begin{proof}
The proof follows the standard methodology,  see chapter 9 of \cite{Nielsen} for the countable mixture case. Let $\big|\psi_{x}\big\rangle$ and $\big|\sigma_{x}\big\rangle$ be the purifications of $\boldsymbol{\hat{\rho}}_{x}$ and $\boldsymbol{\hat{\sigma}}_{x}$ which maximize the fidelity; i.e. $F\Big( \boldsymbol{\hat{\rho}}_{x}, \boldsymbol{\hat{\sigma}}_{x}\Big)= \big|\big\langle \psi_{x}\big|\phi_{x}\big\rangle\big|^{2}$. We now define
\begin{equation}
\big|\psi\big\rangle:=\int \sqrt{p(x)}\big|\psi_{x}\big\rangle\big|x\big\rangle dx
\end{equation}
\begin{equation}
\big|\phi\big\rangle:=\int \sqrt{q(x)}\big|\phi_{x}\big\rangle\big|x\big\rangle dx.
\end{equation}
$\big|\psi\big\rangle$ and $\big|\phi\big\rangle$ are purifications of the operators $\int p(x) \boldsymbol{\hat{\rho}}_{x}dx$ and $\int q(x) \boldsymbol{\hat{\sigma}}_{x}dx$ where the ancillary space is taken to be $L^{2}\big(\mathbb{R}\big)$. Using Theorem \ref{eqn:maxyo} we have. 
\begin{equation}
\sqrt{F\Big(\int p(x) \boldsymbol{\hat{\rho}}_{x}dx, \int q(x) \boldsymbol{\hat{\sigma}}_{x}dx\Big)}\geq |\big\langle \phi\big|\psi\big\rangle\big| = 
\end{equation}
\begin{equation}
\Bigg|\int \sqrt{p(x)}\sqrt{q(y)}\big\langle\psi_{x}\big|\phi_{y}\big\rangle \big\langle x \big| y\big\rangle dydx \Bigg| =  \Bigg|\int\int \sqrt{p(x)q(x)}\big\langle\psi_{x}\big|\phi_{x}\big\rangle dx \Bigg| =  
\end{equation}
\begin{equation}
\Bigg|\int \sqrt{p(x)q(x)}\big\langle\psi_{x}\big|\phi_{x}\big\rangle dx \Bigg| =\int \sqrt{p(x)q(x)}\sqrt{F\big(\boldsymbol{\hat{\rho}}_{x},\boldsymbol{\hat{\sigma}}_{x}\big)}dx 
\end{equation}
\end{proof}
The latter gives us the means by which we may bound fidelities of mixed state from below. The following corollary follows immediately from Lemma \ref{eqn:subcont}. 
\begin{Co}
\label{eqn:explion}
Let $\int p(x) \boldsymbol{\hat{\rho}}_{x}dx$ be an uncountable mixture, and let $ \boldsymbol{\hat{\sigma}}$ be an arbitrary density operator (p(x) is a probability distribution). Then, 
\begin{equation}
 \sqrt{F\Big(\int p(x) \boldsymbol{\hat{\rho}}_{x}dx, \boldsymbol{\hat{\sigma}}\Big)}\geq \int p(x)\sqrt{F\big( \boldsymbol{\hat{\rho}}_{x}, \boldsymbol{\hat{\sigma}}\big)}dx   
\end{equation}
\end{Co}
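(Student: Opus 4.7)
The corollary is immediate from Lemma \ref{eqn:subcont} by a degenerate choice of the second mixture. The plan is to apply the strong concavity statement with the specific identifications $q(x):=p(x)$ and $\boldsymbol{\hat{\sigma}}_{x}:=\boldsymbol{\hat{\sigma}}$ for every $x$. Because $p$ is a probability density, $\int q(x)\,\boldsymbol{\hat{\sigma}}_{x}\,dx=\int p(x)\,\boldsymbol{\hat{\sigma}}\,dx=\boldsymbol{\hat{\sigma}}$, and the weight factor $\sqrt{p(x)q(x)}$ appearing in the lemma collapses to $p(x)$. Substituting these identifications into the inequality supplied by Lemma \ref{eqn:subcont} yields
\[
\sqrt{F\Big(\int p(x)\,\boldsymbol{\hat{\rho}}_{x}\,dx,\,\boldsymbol{\hat{\sigma}}\Big)}\ \geq\ \int p(x)\sqrt{F\big(\boldsymbol{\hat{\rho}}_{x},\boldsymbol{\hat{\sigma}}\big)}\,dx,
\]
which is precisely the statement of the corollary.

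The only point worth flagging --- and in no sense a real obstacle --- is confirming that Lemma \ref{eqn:subcont} accommodates this degenerate input, i.e.\ a constant operator-valued family. This is automatic: constant maps are measurable, so the purifications $|\phi_{x}\rangle$ appearing in the lemma's proof may all be taken equal to a single fixed purification $|\phi\rangle$ of $\boldsymbol{\hat{\sigma}}$, and the construction of the ancillary purification on $L^{2}(\mathbb{R})$ proceeds unchanged. If one wished to bypass even this light sanity check, an alternative plan would be to repeat the argument of Lemma \ref{eqn:subcont} directly: fix purifications $|\psi_{x}\rangle$ of $\boldsymbol{\hat{\rho}}_{x}$ maximizing $|\langle \psi_{x}|\phi\rangle|^{2}=F(\boldsymbol{\hat{\rho}}_{x},\boldsymbol{\hat{\sigma}})$, set $|\psi\rangle:=\int\sqrt{p(x)}|\psi_{x}\rangle|x\rangle\,dx$ as in the lemma, and take $|\Phi\rangle:=|\phi\rangle\otimes\int\sqrt{p(x)}|x\rangle\,dx$ as the purification of $\boldsymbol{\hat{\sigma}}$ on the same enlarged Hilbert space; computing $|\langle\Phi|\psi\rangle|$ by Fubini reproduces the desired right-hand side and, via Theorem \ref{eqn:maxyo}, the desired lower bound. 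Either route completes the argument with no new machinery.
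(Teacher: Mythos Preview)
Your proposal is correct and matches the paper's own argument essentially verbatim: the paper writes $\boldsymbol{\hat{\sigma}} = \int p(x)\,\boldsymbol{\hat{\sigma}}\,dx$ and applies Lemma~\ref{eqn:subcont}, which is precisely your choice $q(x)=p(x)$, $\boldsymbol{\hat{\sigma}}_{x}=\boldsymbol{\hat{\sigma}}$. The additional sanity check and alternative direct purification argument you sketch are sound but go beyond what the paper records.
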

\begin{proof}
 Note that $\boldsymbol{\hat{\sigma}} = \int p(x)\boldsymbol{\hat{\sigma}} dx $. The proof follows from applying Lemma (\ref{eqn:subcont}) to the fidelity $\sqrt{F\Big(\int p(x) \boldsymbol{\hat{\rho}}_{x}dx, \int p(x) \boldsymbol{\hat{\sigma}}dx\Big)}$. 
\end{proof}

\section{Countable Mixtures of Unitarily Related Families.}
\;\;\; In this section, we restrict our attention to a specific type of ensemble $\{p_{i},\boldsymbol{\hat{\rho}}_{i,t}\}_{i=1}^{N}$. Namely, let 
\begin{equation}
\label{eqn:thestatesoyeah}
\boldsymbol{\hat{\rho}}_{i,t} := e^{-itx_{i}\mathbf{\hat{B}}}\big|\psi\big\rangle\big\langle \psi\big|e^{itx_{i}\mathbf{\hat{B}}}
\end{equation}
for some self-adjoint operator $\mathbf{\hat{B}}$ and a pure density operator $\big|\psi\big\rangle\big\langle \psi\big|$ both acting in a Hilbert space $\mathscr{H}$. The asymptotic discriminability of the mixture $\sum_{i=1}^{N}p_{i}\boldsymbol{\hat{\rho}}_{i,t}$ will be shown to depend on the spectral properties of the operators $\mathbf{\hat{B}}_{k}$ and on the nature of the pure state $\big|\psi\big\rangle\big\langle \psi\big|$. Using Theorem \ref{eqn:knill1} we have the following QSD estimate. 
\begin{equation}
\label{eqn:boigacrey}
\min_{POVM}\Big(1-\sum_{i=1}^{N}p_{i}Tr\big\{\mathbf{\hat{M}}_{i}\boldsymbol{\hat{\rho}}_{i,t}\mathbf{\hat{M}}^{\dagger}_{i} \big\}\Big)\leq \sum_{i}\sum_{j;j\neq i}\sqrt{p_{i}p_{j}}\sqrt{F\big(\boldsymbol{\hat{\rho}}_{i,t},\boldsymbol{\hat{\rho}}_{j,t}\big)}=
\end{equation}
\begin{equation}
\label{eqn:pc}
\sum_{i}\sum_{j;j\neq i}\sqrt{p_{i}p_{j}}\big|\big\langle \psi\big|e^{-it(x_{j}-x_{i})\mathbf{\hat{B}}}\big|\psi\big\rangle\big|\end{equation} 

If \emph{fully solvable} asymptotic QSD is desired in such a case, then it is a necessary and sufficient condition that the elements $\big|\big\langle \psi\big|e^{-it(x_{j}-x_{i})\mathbf{\hat{B}}}\big|\psi\big\rangle\big|$ of the sum above decay to zero as $t\rightarrow \infty$ for all $i,j;i\neq j$. The necessity follows from applying the bound in Theorem \ref{eqn:monta} to the left-hand side of (\ref{eqn:boigacrey}). Sufficiency can be proven by employing Theorem \ref{eqn:knill1}. In \ref{eqn:zelda} of this section we prove necessary and sufficient conditions for full solvability of QSD for a particular set of URM.  
\subsection{Spectral Decomposition and Spectral Measures.}
\;\;\; We start from some spectral theory background \cite{Simon}\cite{pascual}.  Let $\mathbf{\hat{A}}$ be a self-adjoint operator acting in a Hilbert space $\mathscr{H}$.  We have
\begin{equation}
\label{eqn:spec3}
\mathbf{Spec}\big(\mathbf{\hat{A}}\big) = \mathbf{Spec}_{p}\big(\mathbf{\hat{A}}\big)\cup \mathbf{Spec}_{ac}\big(\mathbf{\hat{A}}\big)\cup\mathbf{Spec}_{sc}\big(\mathbf{\hat{A}}\big)\
\end{equation}
where the subscripts $ac$ and $sc$ stand for absolutely continuous and singular continuous respectively. To formally define absolutely continuous and singular continuous spectra let us consider an arbitrary $\big|\psi\big\rangle \in \mathscr{H}$ It follows from spectral theory that there exists a unique measure $\mu_{\psi}$ such that \cite{Simon} 
\begin{equation}
\big\langle \psi\big|\mathbf{\hat{A}}\big|\psi\big\rangle = \int_{\mathbb{R}}\lambda d\mu_{\psi}(\lambda).
\end{equation}
The measure $\mu_{\psi}$ is called the spectral measure associated with $\big|\psi\big\rangle$. By the \emph{Lebesgue Decomposition Theorem} any such measure is a sum of a point measure, an absolutely continuous measure, and a singular continuous measure: 
\begin{equation}
\label{eqn:mesdec}
\mu_{\psi}= \mu_{\psi, p}+\mu_{\psi,ac}+\mu_{\psi, sc}
\end{equation}
This decomposition is unique.
Of particular interest to us will be the properties of the Fourier transforms of the measures on the right-hand side of (\ref{eqn:mesdec}). It is a consequence of the Riemann-Lebesgue Lemma that the Fourier transform of the measure $\mu_{\psi,ac}$ (absolutely continuous with respect to the Lebesgue measure) is a function that decays to zero at infinity. On the other hand, it can be shown that the Fourier transform of $\mu_{\psi, p}$ will exhibit quasiperiodic behavior. In general, the Fourier transform of $\mu_{\psi,sc}$ does not decay to zero, but we shall be particularly interested in the subset of measures continuous with respect to the Lebesgue measures whose Fourier transforms have this property, since then we will be able to prove that the bound (\ref{eqn:knill1}) converges to zero for the case of the mixture presented in (\ref{eqn:thestatesoyeah}), i.e. $\sum_{i=1}^{N}p_{i}\boldsymbol{\hat{\rho}}_{i,t}$. \\

$\mathscr{H}$ may furthermore be expressed as a direct sum of three invariant subspaces; one corresponding to each type of spectrum. Namely, (see \cite{pascual})
\begin{equation}
\mathscr{H} = \mathscr{H}_{p}\oplus \mathscr{H}_{ac}\oplus \mathscr{H}_{sc}
\end{equation}
Recall that QSD may be \emph{ fully solved} asymptotically iff $\forall i\neq j$
\begin{equation}
\label{eqn:link}
\big|\big\langle \psi\big|e^{-it(x_{j}-x_{i})\mathbf{\hat{B}}}\big|\psi\big\rangle\big|\rightarrow 0 \;\;\; (as \;\;\; t\rightarrow \infty)
\end{equation}
It follows from the spectral theorem for unitary operators \cite{Simon} that (\ref{eqn:link}) may be written as 
\begin{equation}
\big|\big\langle \psi\big|e^{-it(x_{j}-x_{i})\mathbf{\hat{B}}}\big|\psi\big\rangle\big| = \Big|\int_{\mathbb{R}}e^{-it(x_{j}-x_{i})\lambda}d\mu_{\psi}(\lambda)\Big| \leq 
\end{equation}
\begin{equation}
\Big|\int_{\mathbb{R}}e^{-it(x_{j}-x_{i})\lambda}d\mu_{\psi,p}(\lambda)\Big|   + \Big|\int_{\mathbb{R}}e^{-it(x_{j}-x_{i})\lambda}d\mu_{\psi,ac}(\lambda)\Big| + \Big|\int_{\mathbb{R}}e^{-it(x_{j}-x_{i})\lambda}d\mu_{\psi,sc}(\lambda)\Big| 
\end{equation}
For $\big|\big\langle \psi\big|e^{-it(x_{j}-x_{i})\mathbf{\hat{B}}}\big|\psi\big\rangle\big|$ to go to $ 0 $ as $t\rightarrow \infty$, it is necessary that $\big| \psi\big\rangle \in \mathscr{H}_{ac}\oplus\mathscr{H}_{sc}$ \cite{gerald}. We must hence constrain ourselves further to the subspace $\mathscr{H}_{rc}$ consisting only of the states $\big|\psi\big\rangle$ whose associated measure $\mu_{\psi}$ is a \emph{Rajchman} measure \cite{gerald} (defined below).
\begin{definition}{\textbf{Rajchman Measure}:}
A finite Borel probability measure $\mu$ on $\mathbf{R}$ is called a Rajchman measure if it satisfies 
\begin{equation}
\lim_{t\rightarrow\infty}\hat{\mu}(t) = 0 
\end{equation}
where
$\hat{\mu}(t) := \int_{\mathbb{R}} e^{2i\pi tx}d\mu(x)$ \;\;, $t\in \mathbb{R}$.
\end{definition}

\begin{definition4}
\label{eqn:firstresult}
Let $\mathbf{\hat{A}}$ be a self-adjoint operator acting on some arbitrary Hilbert space $\mathscr{H}$, then the set of vectors in $\mathscr{H}$ for which the spectral measure is a Rajchman measure, i.e.
\begin{equation}
\label{eqn:link2}
 \mathscr{H}_{rc}:=\Big\{\big|\psi\big\rangle\;| \;\lim_{t\rightarrow \infty} \big\langle \psi\big|e^{-it\mathbf{\hat{A}}}\big|\psi\big\rangle = 0\big\},
\end{equation}
is a closed subspace which is invariant under $e^{-is\mathbf{\hat{A}}}$ \cite{gerald}.
\end{definition4}
In what follows, $\mu_{\phi,\psi}$ is defined implicitly. Given a unitary operator $e^{-it\mathbf{\hat{B}}}$, we compute $\big\langle \phi\big|e^{-it\mathbf{\hat{B}}}\big|\psi\big\rangle = \big\langle \phi\big|\int e^{-it\lambda}d\mathbf{\hat{E}}_{\lambda}\big|\psi\big\rangle = \int e^{-it\lambda}d\big\langle \phi\big|\mathbf{\hat{E}}_{\lambda}\big|\psi\big\rangle $. In the second to last term, we utilized the spectral decomposition of $\mathbf{\hat{B}}$; it is $\big\langle \phi\big|\mathbf{\hat{E}}_{\lambda}\big|\psi\big\rangle$ that we will refer to as $\mu_{\phi,\psi}$.

\begin{definition2}{} 
\label{eqn:rajj}
Let $\mathbf{\hat{B}}$ be a self-adjoint operator acting in a Hilbert space $\mathscr{H}$. Furthermore, let $\big|\psi\big\rangle \in \mathscr{H}_{rc}$ and $\big|\phi\big\rangle \in \mathscr{H}$, then the measure $\mu_{\phi,\psi}$ is Rajchman. 
\end{definition2}
\begin{proof}
\begin{equation}
\int e^{-it\lambda}
\mu_{\phi,\psi}(\lambda) = \big\langle \phi\big|e^{-it\mathbf{\hat{B}}}\big|\psi\big\rangle = \big\langle \phi\big|\Big(\mathbf{\hat{P}}_{rc}e^{-it\mathbf{\hat{B}}}\big|\psi\big\rangle\Big) = 
\end{equation}
\begin{equation}
 \Big(\big\langle \phi\big|\mathbf{\hat{P}}_{rc}\Big)e^{-it\lambda}\big|\psi\big\rangle =\big \langle \xi\big|e^{-it\mathbf{\hat{B}}}\big|\psi\big\rangle  
\end{equation}
where $\mathbf{\hat{P}}_{rc}$ is the projector onto the subspace $\mathscr{H}_{rc}$ and $\big|\xi\big\rangle := \mathbf{\hat{P}}_{rc}\big|\phi\big\rangle\in \mathscr{H}_{rc}$. We have used the fact that the Rajchman subspace is invariant under the action of $e^{-it\mathbf{\hat{B}}}$. Using the polarization identity (see \cite{giacomo}, Chapter 2, Excersise 2.1) we have
\begin{equation}
\big \langle \xi\big|e^{-it\mathbf{\hat{B}}}\big|\psi\big\rangle = 
\frac{1}{4}\sum_{k=0}^{3}i^{k}\Bigg(\Big(\big\langle \xi\big|+(-i)^{k}\big\langle \psi\big|\Big)e^{-it\mathbf{\hat{B}}}\Big(\big|\xi\big\rangle+i^{k}\big|\psi\big\rangle\Big)\Bigg) = 
\end{equation} 
\begin{equation}
\frac{1}{4}\sum_{k=0}^{3}i^{k}\big\langle \chi_{k}\big|e^{-it\mathbf{\hat{B}}}\Big|\chi_{k}\big\rangle
\end{equation}
where we have defined $\big|\chi_{k}\big\rangle:\big|\xi\big\rangle+i^{k}\big|\psi\big\rangle$. $\mathscr{H}_{rc}$ is a linear space, hence $\big|\chi_{k}\big\rangle\in\mathscr{H}_{rc}$ for $k = 0,1,2,3$. It follows that
\begin{equation}
\int e^{-it\lambda}
\mu_{\phi,\psi}(\lambda) =  \frac{1}{4}\sum_{k=0}^{3}i^{k}\big\langle \chi_{k}\big|e^{-it\mathbf{\hat{B}}}\Big|\chi_{k}\big\rangle = \frac{1}{4}\sum_{k=0}^{3}i^{k}\int e^{-it\lambda}d\mu_{\chi_{k}}(\lambda).  
\end{equation}
As $t\rightarrow \infty$ $\int e^{-it\lambda}d\mu_{\chi_{k}}(\lambda)\rightarrow 0$. Hence $\int e^{-it\lambda}
\mu_{\phi,\psi}(\lambda)\rightarrow 0$ as $t\rightarrow \infty$.
\end{proof}

We conclude this subsection with the following proposition.
\begin{definition9}{\textbf{Full Solvability of QSD for URM of Pure States}:}
\label{eqn:zelda}
Consider the model described in this section by the states (\ref{eqn:thestatesoyeah}).  $\big|\psi\big\rangle\in\mathscr{H}_{rc}$ corresponding to $\mathbf{\hat{B}}$ iff
\begin{equation}
 \lim_{t\rightarrow \infty} \min_{POVM}p_{E}\Big\{\big\{p_{i},\;e^{-itx_{i}\mathbf{\hat{B}}}\big|\psi\big\rangle\big\langle \psi\big|e^{itx_{i}\mathbf{\hat{B}}}\big\}_{i=1}^{N}, \{\mathbf{\hat{M}}_{l}\big\}_{l=1}^{K} \Big\} = 0  
\end{equation}
\end{definition9}

\begin{proof}
This immediately follows from Theorems \ref{eqn:knill1} and \ref{eqn:monta}.
\end{proof}

\section{Unitarily Related Linear Combinations of Finite Mixtures.}
\label{eqn:yoyobitte}
    Let us now consider the case where 
\begin{equation}
\label{eqn:countablemixture2}
\boldsymbol{\hat{\rho}} = \sum_{i=1}^{N}p_{i} \boldsymbol{\hat{\rho}}_{i} \in \mathcal{S}(\mathscr{H}),
\end{equation}
with 
\begin{equation}
\boldsymbol{\hat{\rho}}_{i} = \sum_{j=1}^{M_{i}}\eta_{ij}\big|\phi_{ij}\big\rangle \big\langle \phi_{ij}\big|
\end{equation}
with all of the $\big|\phi_{ij}\big\rangle \in \mathscr{H}$ of norm one and $\sum_{j}\eta_{ij} = 1$. In this case we may again use Theorem \ref{eqn:knill1}, obtaining
\begin{equation}
\label{eqn:prekoen}
 \min_{POVM}p_{E}(t)\leq \sum_{i}\sum_{j;j\neq i}\sqrt{p_{i}p_{j}}\sqrt{F\big(\boldsymbol{\hat{\rho}}_{i}, \boldsymbol{\hat{\rho}}_{j} \big)} 
\end{equation} 
However, the fidelities in this case are not trivial because both $\boldsymbol{\hat{\rho}}_{i}$ and $\boldsymbol{\hat{\rho}}_{j}$ are mixed states. To address this problem, we will use a bound for quantum fidelities from \cite{koenraad}. Namely,
\begin{definition4}{\textbf{Fidelity Bound (Koenraad and Milan} \cite{koenraad}\textbf{)}:} 
\label{eqn:koen}
Let $\sum_{i}p_{i}\boldsymbol{\hat{\rho}}_{i}$ be a countable mixture and let $\boldsymbol{\hat{\sigma}}$ be a density operator, acting on the same Hilbert space. Then, 
\begin{equation}
\sqrt{F\Big(\sum_{i}p_{i}\boldsymbol{\hat{\rho}}_{i}, \boldsymbol{\hat{\sigma}}\Big)}\leq \sum_{i}\sqrt{p_{i}}
\sqrt{F\big(\boldsymbol{\hat{\rho}}_{i},\boldsymbol{\hat{\sigma}}}\big)
\end{equation}
\end{definition4}

Applying Theorem \ref{eqn:koen} twice we may further bound (\ref{eqn:prekoen}) to obtain
\begin{equation}
 \min_{POVM}p_{E}\leq \sum_{i=1}^{N}\sum_{j;j\neq i}^{N}\sqrt{p_{i}p_{j}}\sum_{k=1}^{M_{i}}\sum_{k^{\prime}=1}^{M_{j}}\sqrt{\eta_{ik}}\sqrt{\eta_{ik^{\prime}}}\sqrt{F\big(\big|\phi_{ik}\big\rangle \big\langle \phi_{ik}\big|,\big|\phi_{jk^{\prime}}\big\rangle \big\langle \phi_{jk^{\prime}}\big|\big)} = 
\end{equation}
\begin{equation}
\label{eqn:keonraad3}
\sum_{i=1}^{N}\sum_{j;j\neq i}^{N}\sqrt{p_{i}p_{j}}\sum_{k=1}^{M_{i}}\sum_{k^{\prime}=1}^{M_{j}}\sqrt{\eta_{ik}\eta_{jk^{\prime}}}\big|\big\langle \phi_{ik}\big|\phi_{jk^{\prime}}\big\rangle\big|
\end{equation}
We thus see that the optimal probability error may be controlled by the inner products $\big|\big\langle \phi_{ik}\big|\phi_{jk}\big\rangle\big|$ $(i\neq j)$, which are relatively easy to compute. This leads to the following generalization of Proposition \ref{eqn:zelda}.

\begin{definition4}{\textbf{Full Solvability of QSD for URM of Finite Mixtures}:}
\label{eqn:maintheorem2}
Let $\mathscr{H}$ be an infinite-dimensional Hilbert space. Let $\mathbf{\hat{B}}$ be a self-adjoint operator acting in $\mathscr{H}$. Furthermore, let $\boldsymbol{\hat{\rho}}_{i}:=\sum_{j=1}^{M_{i}}\eta_{ij}\big|\phi_{ij}\big\rangle\big\langle\phi_{ij}\big|$ be finite mixtures in $\mathcal{S}\big(\mathscr{H}\big)$ for each $i$. Then, \begin{equation}
 \lim_{t\rightarrow \infty} \min_{POVM}p_{E}\Big\{\big\{p_{i},\;e^{-itx_{i}\mathbf{\hat{B}}}\boldsymbol{\hat{\rho}}_{i}e^{itx_{i}\mathbf{\hat{B}}}\big\}_{i=1}^{N}, \{\mathbf{\hat{M}}_{l}\big\}_{l=1}^{K} \Big\} = 0  
\end{equation}
iff all of the $\big|\phi_{ij}\big\rangle  \in \mathscr{H}_{rc}$ of $\mathbf{\hat{B}}$.
\end{definition4}
\begin{proof}
First we assume that $\big|\phi_{ij}\big\rangle  \in \mathscr{H}_{rc}$ of $\mathbf{\hat{B}}$ for all $ij$. Now, using (\ref{eqn:keonraad3}) we have 
\begin{equation}
\min_{POVM}p_{E}(t)\leq \sum_{i=1}^{N}\sum_{j;j\neq i}^{N}\sqrt{p_{i}p_{j}}\sum_{k=1}^{M_{i}}\sum_{k^{\prime}=1}^{M_{j}}\sqrt{\eta_{ik}\eta_{jk^{\prime}}}\big|\big\langle \phi_{ik}\big|e^{-it(x_{j}-x_{i})\mathbf{\hat{B}}}\big|\phi_{jk}\big\rangle\big|   
\end{equation}
Since all of the sums above are finite, we need only prove that the limits   
\begin{equation}
\lim_{t\rightarrow \infty}\big|\big\langle \phi_{ik}\big|e^{-it(x_{j}-x_{i})\mathbf{\hat{B}}}\big|\phi_{jk}\big\rangle\big|
\end{equation}
are zero.  This follows from Lemma \ref{eqn:rajj}.

To prove the converse, assume that for some $ij$ $\big|\phi_{ij}\big\rangle\notin \mathscr{H}_{rc}$. Using Theorem \ref{eqn:monta} we have 
\begin{equation}
\min_{POVM}p_{E}(t)\geq \frac{1}{2}\sum_{i=1}^{N}\sum_{j;j\neq i}^{N} p_{i}p_{j}F\Big(e^{-tx_{i}\mathbf{\hat{B}}}\boldsymbol{\hat{\rho}}_{i}e^{tx_{i}\mathbf{\hat{B}}},e^{-tx_{j}\mathbf{\hat{B}}}\boldsymbol{\hat{\rho}_{j}}e^{tx_{i}\mathbf{\hat{B}}}\Big)\geq
\end{equation}
\begin{equation}
\frac{1}{2}\sum_{i=1}^{N}\sum_{j;j\neq i}^{N} p_{i}p_{j}\Bigg(\sum_{k=1}^{\min\big\{M_{i},M_{j}\big\}}\sqrt{\eta_{ik}\eta_{jk}}\big|\big\langle \phi_{ik}\big|e^{-it(x_{j}-x_{i})\mathbf{\hat{B}}}\big|\phi_{jk}\big\rangle\big|^{2}\Bigg)^{2}
\end{equation}
In this case the terms $\big|\big\langle \phi_{ik}\big|e^{-it(x_{j}-x_{i})\mathbf{\hat{B}}}\big|\phi_{jk}\big\rangle\big|^{2}$ will not converge to zero, which implies that $\min_{POVM}p_{E}(t)$ in the statement of the theorem does not converge to zero either.
\end{proof}
\begin{Co}{\textbf{QSD with $\sum_{j}\sqrt{\eta_{ij}}<\infty$ for all $j$}:}
\label{eqn:mainresult3}
Theorem \ref{eqn:maintheorem2} may be extended to the cases where the finite mixtures $\boldsymbol{\hat{\rho}}_{i}$ are replaced by infinite mixtures $\boldsymbol{\hat{\rho}}_{i}:=\sum_{j=1}^{\infty}\eta_{ij}\big|\phi_{ij}\big\rangle\big\langle\phi_{ij}\big|$, where now $\sum_{j=1}^{\infty}\eta_{ij}=1$ for all $i$, provided $\sum_{j}\sqrt{\eta_{ij}}<\infty$ for all $i$. The corollary follows by repeating the proof of Theorem \ref{eqn:maintheorem2}, applying the dominated convergence theorem where necessary. 
\end{Co}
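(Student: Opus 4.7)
The plan is to adapt the proof of Theorem \ref{eqn:maintheorem2} essentially verbatim, with the sole additional ingredient being an interchange-of-limit-and-sum argument to handle the now-infinite inner indices. I would open by applying the Barnum--Knill bound (Theorem \ref{eqn:knill1}, in its infinite-dimensional form proved in Appendix A) and then the Koenraad--Milan fidelity bound (Theorem \ref{eqn:koen}, which is already stated for countable mixtures) twice, to reach the infinite analogue of (\ref{eqn:keonraad3}):
\begin{equation}
\min_{POVM}p_{E}(t) \leq \sum_{i=1}^{N}\sum_{j;j\neq i}^{N}\sqrt{p_{i}p_{j}} \sum_{k=1}^{\infty}\sum_{k^{\prime}=1}^{\infty} \sqrt{\eta_{ik}\eta_{jk^{\prime}}}\, \big|\big\langle \phi_{ik}\big| e^{-it(x_{j}-x_{i})\mathbf{\hat{B}}} \big|\phi_{jk^{\prime}}\big\rangle\big|.
\end{equation}
Note that the hypothesis $\sum_{j}\sqrt{\eta_{ij}}<\infty$ is exactly what makes the two successive applications of Theorem \ref{eqn:koen} nontrivial in the infinite regime.

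For the sufficiency direction, assume every $\big|\phi_{ij}\big\rangle\in\mathscr{H}_{rc}$ of $\mathbf{\hat{B}}$. Lemma \ref{eqn:rajj} then gives, for each fixed $i\neq j$ and each pair $(k,k^{\prime})$, that $\big|\big\langle \phi_{ik}\big| e^{-it(x_{j}-x_{i})\mathbf{\hat{B}}} \big|\phi_{jk^{\prime}}\big\rangle\big|\to 0$ as $t\to\infty$. To pass this pointwise convergence through the double sum I would invoke the dominated convergence theorem on $\mathbb{N}\times\mathbb{N}$ with counting measure: the summand is majorised by $\sqrt{\eta_{ik}\eta_{jk^{\prime}}}$ because unitarity of $e^{-it(x_{j}-x_{i})\mathbf{\hat{B}}}$ and Cauchy--Schwarz force $\big|\big\langle \phi_{ik}\big| e^{-it(x_{j}-x_{i})\mathbf{\hat{B}}} \big|\phi_{jk^{\prime}}\big\rangle\big|\leq 1$, and the majorant is summable by factorisation:
\begin{equation}
\sum_{k,k^{\prime}}\sqrt{\eta_{ik}\eta_{jk^{\prime}}}=\Big(\sum_{k}\sqrt{\eta_{ik}}\Big)\Big(\sum_{k^{\prime}}\sqrt{\eta_{jk^{\prime}}}\Big)<\infty.
\end{equation}
Since the outer sum over $i,j$ is finite, the entire upper bound tends to $0$, hence so does $\min_{POVM}p_{E}(t)$.

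For the converse, I would mirror the Theorem \ref{eqn:maintheorem2} argument: the Montanaro bound (Theorem \ref{eqn:monta}) together with strong concavity of the fidelity (the countable analogue of Lemma \ref{eqn:subcont}) yields a lower bound involving, term by term, the non-negative quantities $\sqrt{\eta_{ik}\eta_{jk}}\,\big|\big\langle \phi_{ik}\big| e^{-it(x_{j}-x_{i})\mathbf{\hat{B}}} \big|\phi_{jk}\big\rangle\big|$. If some $\big|\phi_{i^{*}j^{*}}\big\rangle \notin \mathscr{H}_{rc}$, then by Lemma \ref{eqn:rajj} (and the definition of $\mathscr{H}_{rc}$) at least one of these inner products fails to decay, forcing the lower bound away from zero.

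The only substantive content beyond Theorem \ref{eqn:maintheorem2} is the dominated-convergence step, and I expect this to be the sole mild obstacle: one must check that unitarity alone already supplies a summable majorant, which is why the hypothesis $\sum_{j}\sqrt{\eta_{ij}}<\infty$ is used rather than the weaker convergence $\sum_{j}\eta_{ij}=1$. A secondary book-keeping point is that the countable Koenraad--Milan inequality applies with no modification, so no separate extension of Theorem \ref{eqn:koen} is required.
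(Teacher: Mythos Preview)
Your proposal is correct and follows precisely the route the paper indicates: repeat the proof of Theorem \ref{eqn:maintheorem2} and invoke dominated convergence (with the $t$-independent summable majorant $\sqrt{\eta_{ik}\eta_{jk^{\prime}}}$ supplied by the hypothesis $\sum_{j}\sqrt{\eta_{ij}}<\infty$) to pass the limit through the now-infinite inner sums. The paper gives no further detail than this, so your write-up is in fact more explicit than the original.
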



\section{Uncountable Mixtures.  }
\;\;\; Consider the case where instead of a countable mixture, as seen in (\ref{eqn:countablemixture}), we have an uncountable one. 
\begin{equation}
\label{eqn:locenv}
\boldsymbol{\hat{\rho}}_{t}: = \int p(x) \boldsymbol{\hat{\rho}}_{x,t}dx
\end{equation}
where $\mathscr{H}$ is a Hilbert space, $\mathbf{\hat{B}}$ is a self-adjoint operator acting in $\mathscr{H}$, $\boldsymbol{\hat{\rho}}_{x,t} := e^{-it x\mathbf{\hat{B}}}\big|\psi\big\rangle\big\langle \psi\big|e^{it x\mathbf{\hat{B}}}$, $\big|\psi\big\rangle\big\langle \psi\big|$ is an initial state in $\mathcal{S}\big(\mathscr{H}\big)$,  and $\int p(x) dx = 1$. 
In the QSD literature \cite{bae} \cite{qiu} \cite{Montanaro} \cite{knill}, one almost always encounters ensembles of the form $\sum_{i} p_{i} \boldsymbol{\hat{\rho}}_{i}$ where $p_{i}$  is a discrete probability distribution, and the task is to find a POVM that minimizes $\sum_{i}p_{i}Tr\big\{\boldsymbol{\hat{\rho}}_{i} - \boldsymbol{\hat{M}}_{i}\boldsymbol{\hat{\rho}}_{i}\boldsymbol{\hat{M}}^{\dagger}_{i}\big\}$ while satisfying $\sum_{i} \mathbf{\hat{M}}^{\dagger}_{i} \mathbf{\hat{M}}_{i} = \mathbb{I}$.
We will rely on the existing theory of QSD for countable mixtures. We will do this by defining the concept of an $N$-mixture associated with the uncountable mixture (\ref{eqn:locenv}).

\begin{definition}{$\mathbf{N}$-mixture:}
Let $\boldsymbol{\hat{\rho}}_{t}:=\int p(x) \boldsymbol{\hat{\rho}}_{x,t}dx$ be an uncountable mixture. Given a partition $\cup_{i=1}^{N}\Omega_{i}$ of the support of $p(x)$, let $p_{i}:= \int_{\Omega_{i}}p(x)dx$, $\bar{p}(x) := \frac{p(x)}{p_{i}(t)}$ and $\boldsymbol{\hat{\rho}}_{i,t}:= \int_{\Omega_{i}} \bar{p}(x) \boldsymbol{\hat{\rho}}_{x,t}dx$.  We then call the representation
\begin{equation}
\label{eqn:count2}
 \boldsymbol{\hat{\rho}}_{t} = \sum_{i=1}^{N}p_{i}\boldsymbol{\hat{\rho}}_{i,t}
\end{equation}
an N-mixture associated with the partition.  We emphasize that this is not an approximation but merely a way of rewriting $\boldsymbol{\hat{\rho}}_{t}$; note also that the $\boldsymbol{\hat{\rho}}_{i,t}$ are density operators. 
\end{definition}

Given the representation (\ref{eqn:count2}), we can use the theory of countable mixture QSD to construct a POVM that approximately minimizes
$\sum_{i}p_{i}Tr\big\{\boldsymbol{\hat{\rho}}_{i,t} - \boldsymbol{\hat{M}}_{i,t}\boldsymbol{\hat{\rho}}_{i,t}\boldsymbol{\hat{M}}^{\dagger}_{i,t}\big\}$.  Typically, finding the minimizing POVM exactly is not possible.  However, we estimate the minimal error using the Knill-Barnum bound (\ref{eqn:knill}) \cite{knill} and prove that it goes to zero as $t \rightarrow \infty$, i.e. the QSD problem associated with the UQSD problem is \emph{fully solvable}. The above discussion leads to the following formal definition.

\begin{definition}{QSD for uncountable mixtures:}    \label{eqn:uncountablestatediscrimination}
Let $\mathscr{H}$ be an arbitrary Hilbert space. Consider the uncountably mixed state $\boldsymbol{\hat{\rho}}_{t}:=\int p(x)e^{-it x\mathbf{\hat{B}}}\boldsymbol{\hat{\rho}}e^{it x\mathbf{\hat{B}}}dx$, where $p(x)$ is a probability density, $\boldsymbol{\hat{\rho}} \in \mathcal{S}\big(\mathscr{H}\big)$ is an initial state, and $\mathbf{\hat{B}}$ is a self-adjoint operator acting in $\mathscr{H}$. Furthermore, consider an $N$-mixture of $\boldsymbol{\hat{\rho}}_{t}$ with respect to some partition of the support of $p(x)$, $\cup_{i=1}^{N}\Omega_{i}$. The problem of finding the minimum
\begin{equation}
\label{eqn:uncountdisprob}
\min_{POVM}\sum_{i=1}^{N}p_{i}\Big(1-Tr\big\{\mathbf{\hat{M}}_{i}\boldsymbol{\hat{\rho}}_{i,t}\mathbf{\hat{M}}^{\dagger}_{i}\big\} \Big)
\end{equation}
where $p_{i}: = \int_{\Omega_{i}}p(x)dx$, $\bar{p}_{i}(x):= \frac{p(x)}{p_{i}}$ and $\boldsymbol{\hat{\rho}}_{i,t}:=\int_{\Omega_{i}} \bar{p}_{i}(x)e^{-it x\mathbf{\hat{B}}}\boldsymbol{\hat{\rho}}e^{it x\mathbf{\hat{B}}}dx$ is called the QSD problem associated with the given UQSD problem.
\end{definition}

Proposition 1 cannot be applied to prove complete solvability of the QSD problem arising from uncountable mixtures as described above.  The reason is that $\boldsymbol{\hat{\rho}}_{i,t}$ are not pure states (each one of them has been defined as a mixture of uncountably many pure states), so the techniques used in the proof of Theorem 8 are insufficient in this case. The main hurdle is the fidelity $F\big(\boldsymbol{\hat{\rho}}, \boldsymbol{\hat{\sigma}}\big)$  which is much more difficult to estimate for mixed states. Below we conjecture and motivate necessary and sufficient conditions for full solvability of the QSD problem for uncountable mixtures.  
\begin{definition7}{\textbf{Necessary and Sufficient Conditions for Full Solvability of UQSD for URM}:} 
\label{eqn:resulttwo}
Consider the setup of Definition \ref{eqn:uncountablestatediscrimination}. We conjecture that the UQSD optimization problem induced by a partition $\cup_{i=1}^{N}\Omega_{i}$ is fully solvable as $t \rightarrow \infty$ iff $\boldsymbol{\hat{\rho}}\in\mathcal{S}\big(\mathscr{H}_{rc}\big)$, where $\mathscr{H}_{rc}$ is the Rajchman subspace of the operator $\mathbf{\hat{B}}$.
\end{definition7}
To motivate Conjecture \ref{eqn:resulttwo}, we will need the concept super fidelity, as appearing in the following theorem.
\begin{definition4}{\textbf{Super Fidelity} \cite{superfidelity}:}
\label{eqn:superfidelityq}
For any two density operators $\boldsymbol{\hat{\rho}}$ and $\boldsymbol{\hat{\sigma}}$ 
\begin{equation}
\label{eqn:boiga3000}
 F\big(\boldsymbol{\hat{\rho}}, \boldsymbol{\hat{\sigma}} \big)\leq Tr\big\{\boldsymbol{\hat{\rho}}\boldsymbol{\hat{\sigma}}\big\}+\sqrt{\big(1-Tr\big\{\boldsymbol{\hat{\rho}}^{2}\big\}\big)\big(1-Tr\big\{\boldsymbol{\hat{\sigma}}^{2}\big\}\big)}
 \end{equation}
 The right-hand side of (\ref{eqn:boiga3000}) is called super fidelity. 
\end{definition4}

Let us now consider the uncountable unitarily related mixture $\int p(x)e^{-itx\mathbf{\hat{B}}}\big|\psi\big\rangle\big\langle \psi\big|e^{itx\mathbf{\hat{B}}}dx$ where $\big|\psi\big\rangle \in \mathscr{H}_{rc}$ of $\mathbf{\hat{B}}$. Let $p(x)$ be a symmetric convex combination $p(x) =\frac{1}{2}(p_{1}(x)+p_{2}(x))$ of two probability densities with non-overlapping compact supports. Let $\Delta_{1}\subset\mathbb{R}$ and $\Delta_{2}\subset \mathbb{R}$ denote these supports. With these assumptions, we have the following result. 
\begin{definition6}
\label{claimtime}
For any $\varepsilon_{1}>0$ we may find a time domain $\mathscr{T}:=[0,T_{\varepsilon}]$ such that
\begin{equation}
 Tr\Bigg\{ \Bigg(\int p_{i}(x)e^{-itx\mathbf{\hat{B}}}\big|\psi\big\rangle\big\langle \psi\big|e^{itx\mathbf{\hat{B}}}dx\Bigg)^{2}\Bigg\}\geq 1-\varepsilon_{1}   
\end{equation}
for all $t\in\mathscr{T}$ and $i = 1, 2$.
Furthermore, with $\mathscr{T}$ fixed, for any $\varepsilon_{2}>0$ we can choose $\mathbf{dist}\big(\Delta_{1},\Delta_{2}\big)$ such that 
\begin{equation}
Tr\Bigg\{\int p_{1}(x)e^{-itx\mathbf{\hat{B}}}\big|\psi\big\rangle\big\langle \psi\big|e^{itx\mathbf{\hat{B}}}dx\int p_{2}(x)e^{-itx\mathbf{\hat{B}}}\big|\psi\big\rangle\big\langle \psi\big|e^{itx\mathbf{\hat{B}}}dx\Bigg\} <\varepsilon_{2}
\end{equation}
\end{definition6}
\begin{proof}
Fix $\varepsilon_{2}>0$ and let $t^{\prime}\in \mathscr{T}$. Now,
\begin{equation}
Tr\Bigg\{\int p_{1}(x)e^{-itx\mathbf{\hat{B}}}\big|\psi\big\rangle\big\langle \psi\big|e^{itx\mathbf{\hat{B}}}dx\int p_{2}(x)e^{-itx\mathbf{\hat{B}}}\big|\psi\big\rangle\big\langle \psi\big|e^{itx\mathbf{\hat{B}}}dx\Bigg\} =
\end{equation}
\begin{equation}
\label{eqn:psl1}
\int\int p_{1}(x)p_{2}(x)\Big|\big\langle \psi\big|e^{-it(y-x)\mathbf{B}}\big|\psi\big\rangle\Big|^{2} dxdy    
\end{equation}
$\big|\psi\big\rangle\in\mathscr{H}_{rc}$ (of $\mathbf{\hat{B}}$), so $\lim_{\alpha \rightarrow \infty}\big\langle \psi\big|e^{-i\alpha\mathbf{\hat{B}}}\big|\psi\big\rangle = 0$. This implies that there exists a $\delta>0$ such that for all $\alpha> \delta$ we have $|\big\langle \psi\big|e^{-i\alpha\mathbf{\hat{B}}}\big|\psi\big\rangle|\leq\sqrt{\varepsilon_{2}}$. Choosing $\mathbf{dist}\big(\Delta_{1},\Delta_{2}\big)t^{\prime}> \delta$ we have 
\begin{equation}
\label{eqn:psl2}
\int\int p_{1}(x)p_{2}(x)\Big|\big\langle \psi\big|e^{-it(y-x)\mathbf{B}}\big|\psi\big\rangle\Big|^{2} dxdy \leq  \int\int p_{1}(x)p_{2}(x)|\sqrt{\varepsilon_{2}}|^{2} dxdy  = \varepsilon_{2}
\end{equation}
for all $t\in [t^{\prime},T]$.
\end{proof}
Theorem \ref{eqn:superfidelityq} and Claim \ref{claimtime} together imply the following result. 
\begin{equation}
F\Bigg(\int p_{1}(x)e^{-itx\mathbf{\hat{B}}}\big|\psi\big\rangle\big\langle \psi\big|e^{itx\mathbf{\hat{B}}}dx, \;\;\int p_{2}(x)e^{-itx\mathbf{\hat{B}}}\big|\psi\big\rangle\big\langle \psi\big|e^{itx\mathbf{\hat{B}}}dx\Bigg)\leq \varepsilon_{2}+\varepsilon_{1}
\end{equation}
showing that we may approximately solve the UQSD problem for the 2-mixture $\sum_{i=1}^{2}p_{i}\boldsymbol{\hat{\rho}}_{i}$ where $p_{1}=p_{2} = \frac{1}{2}$ and of course $\boldsymbol{\hat{\rho}}_{i}:= \int p_{i}(x)e^{-itx\mathbf{\hat{B}}}\big|\psi\big\rangle\big\langle \psi\big|e^{itx\mathbf{\hat{B}}}dx$. We can proceed similarly for a more general convex linear combinations of probability densities, placing their supports far apart from one another, enough to use the super fidelity bound with $t$ large.  It is clear that $\big|\psi\big\rangle \in \mathscr{H}_{rc}$ of $\mathbf{\hat{B}}$ plays a key role in going from (\ref{eqn:psl1}) to (\ref{eqn:psl2}); without this assumption our conclusion is not attainable.  
\section{Acknowledgements}
Both authors were partially supported by NSF grant DMS 1911358. J. W. was also partially supported by the Simons Foundation Fellowship 823539.

\newpage 

\appendix 

\addcontentsline{toc}{chapter}{APPENDICES}
\section*{Appendix A : Proof of the Knill-Barnum for Infinite-Dimensional Density Operators}\addcontentsline{toc}{chapter}{Appendix A: Notation}
\label{eqn:genknill}
\;\;\; The inequality (\ref{eqn:knill}) was proved in \cite{knill}, under the assumption that the underlying Hilbert space ia finite-dimensional. The proof provided in \cite{knill} involves the inverse of the operator $\sum_{i}p_{i}\boldsymbol{\hat{\rho}}_{i}$ in a crucial manner; this approach is not extendable to the case where the Hilbert space in question is infinite-dimensional since the inverse of a compact operator is in this case unbounded. It is therefore important to prove that the Knill and Barnum bound (\ref{eqn:knill}) may be extended to the infinite-dimensional case. Consider the mixture $\sum_{i}p_{i}\boldsymbol{\hat{\rho}}_{i}$ of density operators, all acting in an infinite dimensional Hilbert space (with $\sum_{i}p_{i}=1$). 
\begin{definition2}
\label{lemma:lemma3}
Let $\boldsymbol{\hat{\rho}}_{i,d} = \sum_{k=1}^{d}\lambda_{ki}\big|\psi_{ki}\big\rangle\big\langle\psi_{ki}\big|$ be a rank d approximation of the operator $\boldsymbol{\hat{\rho}}_{i}$ . Then
\begin{equation}
 \lim_{d\rightarrow \infty}\Big\|\sqrt{\boldsymbol{\hat{\rho}}_{i,d}}\sqrt{\boldsymbol{\hat{\rho}}_{j,d}}\Big\|_{1} = \Big\|\sqrt{\boldsymbol{\hat{\rho}}_{i}}\sqrt{\boldsymbol{\hat{\rho}}_{j}}\Big\|_{1}
\end{equation}
\end{definition2}
\begin{proof}
\begin{equation}
\lim_{d\rightarrow\infty}\bigg|\Big\|\sqrt{\boldsymbol{\hat{\rho}}_{i,d}}\sqrt{\boldsymbol{\hat{\rho}}_{j,d}}\Big\|_{1}- \Big\|\sqrt{\boldsymbol{\hat{\rho}}_{i}}\sqrt{\boldsymbol{\hat{\rho}}_{j}}\Big\|_{1}\bigg| \leq
\end{equation}
\begin{equation}
\lim_{d\rightarrow\infty}\bigg\|\sqrt{\boldsymbol{\hat{\rho}}_{i,d}}\sqrt{\boldsymbol{\hat{\rho}}_{j,d}}- \sqrt{\boldsymbol{\hat{\rho}}_{i}}\sqrt{\boldsymbol{\hat{\rho}}_{j}}\bigg\|_{1} \leq
\end{equation}
\begin{equation}
\lim_{d\rightarrow\infty}\bigg\|\sqrt{\boldsymbol{\hat{\rho}}_{i,d}}\sqrt{\boldsymbol{\hat{\rho}}_{j,d}}- \sqrt{\boldsymbol{\hat{\rho}}_{i}}\sqrt{\boldsymbol{\hat{\rho}}_{j,d}}\bigg\|_{1} +\lim_{d\rightarrow\infty}\bigg\|\sqrt{\boldsymbol{\hat{\rho}}_{i}}\sqrt{\boldsymbol{\hat{\rho}}_{j,d}}- \sqrt{\boldsymbol{\hat{\rho}}_{i}}\sqrt{\boldsymbol{\hat{\rho}}_{j}}\bigg\|_{1} \leq
\end{equation}
\begin{equation}
\lim_{d\rightarrow\infty}\bigg\|\sqrt{\boldsymbol{\hat{\rho}}_{i,d}}- \sqrt{\boldsymbol{\hat{\rho}}_{i}}\bigg\|_{2}\Big\|\sqrt{\boldsymbol{\hat{\rho}}_{j,d}}\Big\|_{2} +\lim_{d\rightarrow\infty}\bigg\|\sqrt{\boldsymbol{\hat{\rho}}_{j,d}}- \sqrt{\boldsymbol{\hat{\rho}}_{j}}\bigg\|_{2}\Big\|\sqrt{\boldsymbol{\hat{\rho}}_{i}}\Big\|_{2} \leq 
\end{equation}
\begin{equation}
\lim_{d\rightarrow\infty}\bigg\|\sqrt{\boldsymbol{\hat{\rho}}_{i,d}}- \sqrt{\boldsymbol{\hat{\rho}}_{i}}\bigg\|_{2} +\lim_{d\rightarrow\infty}\bigg\|\sqrt{\boldsymbol{\hat{\rho}}_{j,d}}- \sqrt{\boldsymbol{\hat{\rho}}_{j}}\bigg\|_{2} = 
\end{equation}
\begin{equation} 
\lim_{d\rightarrow\infty}\bigg\|\sum_{k=d+1}^{\infty}\sqrt{\lambda_{ki}}\big|\psi_{ki}\big\rangle\big\langle \psi_{ki}\big|\bigg\|_{2} +\lim_{d\rightarrow\infty}\bigg\|\sum_{k=d+1}^{\infty}\sqrt{\lambda_{kj}}\big|\psi_{kj}\big\rangle\big\langle \psi_{kj}\big|\bigg\|_{2} =
\end{equation}
\begin{equation}
\lim_{d\rightarrow \infty}\Bigg(\sqrt{\sum_{k=d+1}^{\infty} \lambda_{ki}}+\sqrt{\sum_{k=d+1}^{\infty}\lambda_{kj}}\Bigg) =\sqrt{\lim_{d\rightarrow \infty}\sum_{k=d+1}^{\infty} \lambda_{ki}}+\sqrt{\lim_{d\rightarrow \infty}\sum_{k=d+1}^{\infty} \lambda_{kj}} =
\end{equation}
\begin{equation}
\sqrt{0}+\sqrt{0} = 0   
\end{equation}
\end{proof}
We are now ready to generalize Theorem \ref{eqn:knill1}.

\begin{definition4}{\textbf{Knill-Barnum bound extended to infinite dimensional density operators}:}
\label{eqn: knillbarnumgeneralized}
Let $\sum_{i=1}^{N} p_{i} \boldsymbol{\hat{\rho}}_{i}$ be a finite mixture of infinite-dimensional density operators. Then, the Knill-Barnum bound (\ref{eqn:knill}) applies.
\end{definition4}
\begin{proof}
 Starting from the optimization problem $ \min_{POVM}\sum_{i=1}^{N}\sum_{j; j\neq i}^{N}p_{i}Tr\{\mathbf{\hat{M}}_{j}\boldsymbol{\hat{\rho}}_{i}\mathbf{\hat{M}}^{\dagger}_{j}\} $, notice that we may rewrite each $\boldsymbol{\hat{\rho}}_{i}$ as the limit of a sequence of finite rank operators. To see this, first, we diagonalize the $\boldsymbol{\hat{\rho}}_{i}$, i.e. we write $\boldsymbol{\hat{\rho}}_{i} = \sum_{k=1}^{\infty}\lambda_{ki}\big|\psi_{ki}\big\rangle\big\langle\psi_{ki}\big|$. Here
 , the $\lambda_{ki}$ are the eigenvalues of $\boldsymbol{\hat{\rho}}_{i}$. A $d$-rank approximation of $\boldsymbol{\hat{\rho}}_{i}$ is therefore $\boldsymbol{\hat{\rho}}_{i,d}:= \sum_{k=1}^{d}\lambda_{ki}\big|\psi_{ki}\big\rangle\big\langle\psi_{ki}\big|$  and indeed 
 \begin{equation} \lim_{d\rightarrow \infty} \Big\|\boldsymbol{\hat{\rho}}_{i,d} -  \boldsymbol{\hat{\rho}}_{i}\Big\|_{1}= 
 \lim_{d\rightarrow \infty}\Big\|\sum_{k = d+1}^{\infty}\lambda_{ki}\big|\psi_{ki}\big\rangle\big\langle \psi_{ki}\big|\Big\|_{1} \leq \lim_{d\rightarrow \infty} \sum_{k = d+1}^{\infty}|\lambda_{ki}| = 0.
 \end{equation}
To proceed we must first demonstrate
\begin{equation}
\min_{POVM}\sum_{i=1}^{N}\sum_{j; j\neq i}^{N}p_{i}Tr\{\mathbf{\hat{M}}_{j}\boldsymbol{\hat{\rho}}_{i}\mathbf{\hat{M}}_{j}^{\dagger}\} = \lim_{d\rightarrow\infty}\min_{POVM}\sum_{i=1}^{N}\sum_{j; j\neq i}^{N}p_{i}Tr\{\mathbf{\hat{M}}_{j}\boldsymbol{\hat{\rho}}_{i,d}\mathbf{\hat{M}}_{j}^{\dagger}\}.
\end{equation}
To prove the latter it is enough to show that
\begin{equation} 
\lim_{d\rightarrow\infty}\bigg|\min_{POVM}\sum_{i=1}^{N}\sum_{j; i\neq j}^{N}p_{i}Tr\{\mathbf{\hat{M}}_{j}\boldsymbol{\hat{\rho}}_{i}\mathbf{\hat{M}}_{j}^{\dagger}\} - \min_{POVM}\sum_{i=1}^{N}\sum_{j; i\neq j}^{N}p_{i}Tr\{\mathbf{\hat{M}}_{j}\boldsymbol{\hat{\rho}}_{i,d}\mathbf{\hat{M}}_{j}^{\dagger}\}\bigg|=  0  
\end{equation}
We have
\begin{equation}
\bigg|\min_{POVM}\sum_{i=1}^{N}\sum_{j; j\neq i}^{N}p_{i}Tr\{\mathbf{\hat{M}}_{j}\boldsymbol{\hat{\rho}}_{i}\mathbf{\hat{M}}_{j}^{\dagger}\} - \min_{POVM}\sum_{i=1}^{N}\sum_{j; i\neq j}^{N}p_{i}Tr\{\mathbf{\hat{M}}_{j}\boldsymbol{\hat{\rho}}_{i,d}\mathbf{\hat{M}}_{j}^{\dagger}\}\bigg| \leq 
\end{equation}

\begin{equation}
\max_{POVM}\bigg|\sum_{i=1}^{N}\sum_{j; j\neq i}^{N}p_{i}Tr\{ \mathbf{\hat{M}}_{j}\big(\boldsymbol{\hat{\rho}}_{i} -\boldsymbol{\hat{\rho}}_{i,d}\big)\mathbf{\hat{M}}_{j}^{\dagger}\} \bigg| = \max_{POVM}\bigg|\sum_{i=1}^{N}\sum_{j; j\neq i}^{N}p_{i}Tr\{\mathbf{\hat{M}}_{j}^{\dagger} \mathbf{\hat{M}}_{j}\big(\boldsymbol{\hat{\rho}}_{i} -\boldsymbol{\hat{\rho}}_{i,d}\big)\} \bigg|\leq
\end{equation}

\begin{equation}
\max_{POVM}\sum_{i=1}^{N}\sum_{j; i\neq j}^{N}p_{i}\Big\|\mathbf{\hat{M}}_{j}^{\dagger}\mathbf{\hat{M}}_{j}\big(\boldsymbol{\hat{\rho}}_{i} -\boldsymbol{\hat{\rho}}_{i,d}\big) \Big\|_{1} \leq\max_{POVM}\sum_{i=1}^{N}\sum_{j; j\neq i}^{N}p_{i}\big\|\mathbf{\hat{M}}_{j}^{\dagger} \mathbf{\hat{M}}_{j}\big\|\Big\|\big(\boldsymbol{\hat{\rho}}_{i} -\boldsymbol{\hat{\rho}}_{i,d}\big)  \Big\|_{1}\leq 
\end{equation}
\begin{equation}
\max_{POVM}\sum_{i=1}^{N}\sum_{j; j\neq i}^{N}p_{i}\Big\|\big(\boldsymbol{\hat{\rho}}_{i} -\boldsymbol{\hat{\rho}}_{i,d}\big)  \Big\|_{1}  = \sum_{i=1}^{N}\sum_{j; j\neq i}^{N}p_{i}\Big\|\big(\boldsymbol{\hat{\rho}}_{i} -\boldsymbol{\hat{\rho}}_{i,d}\big) \Big\|_{1} = \end{equation}
\begin{equation}
\sum_{i=1}^{N}\sum_{j; j\neq i}^{N}p_{i}\sum_{k = d+1}^{\infty}\lambda_{ki} \leq N\sum_{i=1}^{N}p_{i}\sum_{k = d+1}^{\infty}\lambda_{ki}
\end{equation}
Now, since
\begin{equation}
\lim_{d\rightarrow \infty} N\sum_{i=1}^{N}p_{i}\sum_{k = d+1}^{\infty}\lambda_{ki} = 
\end{equation}
\begin{equation}
N\sum_{i=1}^{N}p_{i}\lim_{d\rightarrow \infty}\sum_{k = d+1}^{\infty}\lambda_{ki}   =  N\sum_{j=1}^{N}0 = 0
\end{equation}
it follows that 
\begin{equation}
\lim_{d\rightarrow \infty}\bigg|\min_{POVM}\sum_{i=1}^{N}\sum_{j; j\neq i}^{N}p_{i}Tr\{\mathbf{\hat{M}}_{j}\boldsymbol{\hat{\rho}}_{i}\mathbf{\hat{M}}_{j}^{\dagger}\} - \min_{POVM}\sum_{i,j; i\neq j}^{N}p_{i}Tr\{\mathbf{\hat{M}}_{j}\boldsymbol{\hat{\rho}}_{i,d}\mathbf{\hat{M}}_{j}^{\dagger}\}\bigg| =0 
\end{equation}
which in turn implies 
\begin{equation} \min_{POVM}\sum_{i=1}^{N}\sum_{j; j\neq i}^{N}p_{j}Tr\{\mathbf{\hat{M}}_{j}\boldsymbol{\hat{\rho}}_{i}\mathbf{\hat{M}}_{j}^{\dagger}\} = \lim_{d\rightarrow \infty}\min_{POVM}\sum_{i=1}^{N}\sum_{j; j\neq i}^{N}p_{i}Tr\{\mathbf{\hat{M}}_{j}\boldsymbol{\hat{\rho}}_{i,d}\mathbf{\hat{M}}_{j}^{\dagger}\}.
\end{equation}
Let us now introduce a normalizing constant $\alpha_{i,d}: = Tr\big\{ \boldsymbol{\hat{\rho}}_{i,d}  \big\}$. Using this normalization and the Knill-Barnum bound (\ref{eqn:knill}) \cite{knill}, we have
\begin{equation}
\lim_{d\rightarrow \infty}\min_{POVM}\sum_{i=1}^{N}\sum_{j; j\neq i}^{N}p_{j}\alpha_{i,d}Tr\{\mathbf{\hat{M}}_{j}\alpha_{i,d}^{-1}\boldsymbol{\hat{\rho}}_{i,d}\mathbf{\hat{M}}_{j}^{\dagger}\}\leq
\lim_{d\rightarrow \infty}\max_{k}\big(\alpha_{k,d}\big)\min_{POVM}\sum_{i=1}^{N}\sum_{j; j\neq i}^{N}p_{i}Tr\{\mathbf{\hat{M}}_{j}\alpha_{i,d}^{-1}\boldsymbol{\hat{\rho}}_{i,d}\mathbf{\hat{M}}_{j}^{\dagger}\}\leq\end{equation}
\begin{equation}
\lim_{d\rightarrow \infty}\max_{k}\big(\alpha_{k,d}\big) \sum_{i=1}^{N}\sum_{j; j\neq i}^{N}\sqrt{p_{i}p_{j}}\sqrt{F(\alpha_{i,d}^{-1}\boldsymbol{\hat{\rho}}_{i,d}, \alpha_{j,d}^{-1}\boldsymbol{\hat{\rho}}_{j,d})} = 
\end{equation}
\begin{equation}
\lim_{d\rightarrow \infty}\max_{k}\big(\alpha_{k,d}\big) \sum_{i=1}^{N}\sum_{j; j\neq i}^{N}\sqrt{p_{i}p_{j}}\Big\|\sqrt{\alpha_{i,d}^{-1}\boldsymbol{\hat{\rho}}_{i,d}}\sqrt{ \alpha_{j,d}^{-1}\boldsymbol{\hat{\rho}}_{j,d}}\Big\|_{1}=
\end{equation}
\begin{equation}
\sum_{i=1}^{N}\sum_{j; j\neq i}^{N}\sqrt{p_{i}p_{j}}\lim_{d\rightarrow \infty}\max_{k}\big(\alpha_{k,d}\big)\sqrt{\alpha_{i,d}^{-1}\alpha_{j,d}^{-1}}\Big\|\sqrt{\boldsymbol{\hat{\rho}}_{i,d}}\sqrt{ \boldsymbol{\hat{\rho}}_{j,d}}\Big\|_{1}=
\end{equation}
\begin{equation}
\sum_{i=1}^{N}\sum_{j; j\neq i}^{N}\sqrt{p_{i}p_{j}}\Big(\lim_{d\rightarrow \infty}\max_{k}\big(\alpha_{k,d}\big)\sqrt{\alpha_{i,d}^{-1}\alpha_{j,d}^{-1}}\Big)\Big(
\lim_{d\rightarrow\infty}\Big\|\sqrt{\boldsymbol{\hat{\rho}}_{i,d}}\sqrt{ \boldsymbol{\hat{\rho}}_{j,d}}\Big\|_{1}\Big) \leq
\end{equation}
\begin{equation}
\sum_{i=1}^{N}\sum_{j; j\neq i}^{N}\sqrt{p_{i}p_{j}}\Big(\lim_{d\rightarrow \infty}\sqrt{\alpha_{i,d}^{-1}\alpha_{j,d}^{-1}}\Big)\Big(
\lim_{d\rightarrow\infty}\Big\|\sqrt{\boldsymbol{\hat{\rho}}_{i,d}}\sqrt{ \boldsymbol{\hat{\rho}}_{j,d}}\Big\|_{1}\Big) 
\end{equation}
\begin{equation}
\sum_{i=1}^{N}\sum_{j; j\neq i}^{N}\sqrt{p_{i}p_{j}}\sqrt{F(\boldsymbol{\hat{\rho}}_{i}, \boldsymbol{\hat{\rho}}_{j})}
\end{equation}
where we have used lemma \ref{lemma:lemma3} and the fact that $\lim_{d\rightarrow} \alpha_{k,d} = 1$ for all $k$ in the final equality. 
\end{proof}
\section*{Appendix B : Quantum Chernoff Bounds \cite{szkola}}\addcontentsline{toc}{chapter}{Appendix A: Notation}
\label{eqn:cherr}
One of the main results in \cite{szkola} is the following relationship, useful for the study of Asymptotic QSD. 
\begin{equation}
\frac{1}{3}\xi_{QCB}\Big(\big\{\boldsymbol{\hat{\rho}}_{i}\big\}_{i=1}^{N}\Big)\leq -\lim_{n\rightarrow \infty} \frac{\log\Big(\min_{POVMP}p_{E}(n)\Big)}{n} \leq \xi_{QCB}\Big(\big\{\boldsymbol{\hat{\rho}}_{i}\big\}_{i=1}^{N}\Big)
\end{equation}
where $\xi_{QCB}$ is the quantum Chernoff bound for an $N$ mixture, defined as
\begin{equation}
 \xi_{QCB}\Big(\big\{\boldsymbol{\hat{\rho}}_{i}\big\}_{i=1}^{N}\Big):= \min_{i,j}\xi_{QCB}\big(\boldsymbol{\hat{\rho}}_{i},\boldsymbol{\hat{\rho}}_{j} \big)  
\end{equation}
where 
\begin{equation}
\xi_{QCB}\big(\boldsymbol{\hat{\rho}}_{i},\boldsymbol{\hat{\rho}}_{j} \big) = -\log\big(\min_{0\leq s \leq 1} Tr\big\{\boldsymbol{\hat{\rho}}_{i}^{s}\boldsymbol{\hat{\rho}}_{j}^{1-s}   \big\}  \big)
\end{equation}

\end{document}